\newcommand{\be}{\begin{equation}}
\newcommand{\ee}{\end{equation}}
\newcommand{\bea}{\begin{eqnarray}}
\newcommand{\eea}{\end{eqnarray}}
\def\squareforqed{\hbox{\rlap{$\sqcap$}$\sqcup$}}
\def\qed{\ifmmode\squareforqed\else{\unskip\nobreak\hfil
\penalty50\hskip1em\null\nobreak\hfil\squareforqed
\parfillskip=0pt\finalhyphendemerits=0\endgraf}\fi}
\def\endenv{\ifmmode\;\else{\unskip\nobreak\hfil
\penalty50\hskip1em\null\nobreak\hfil\;
\parfillskip=0pt\finalhyphendemerits=0\endgraf}\fi}
\newcommand{\tr}{\text{Tr}}
\newcommand{\I}{\mathbbm{1}}
\newcommand{\B}{\mathcal{B}}
\newcommand{\D}{\mathcal{D}}
\newcommand{\w}{\omega}
\newcommand{\ket}[1]{|#1\rangle}
\newcommand{\la}{\langle}
\newcommand{\ra}{\rangle}
\newtheorem*{rep@theorem}{\rep@title}
\newcommand{\newreptheorem}[2]{%
\newenvironment{rep#1}[1]{%
 \def\rep@title{#2 \ref{##1}}%
 \begin{rep@theorem}}%
 {\end{rep@theorem}}}
\newtheorem{thm}{Theorem}
\newtheorem{fact}{Fact}
\newtheorem{proposition}{Proposition}
\begin{document}

\title{Perfect discrimination of quantum measurements using entangled systems}

\author{Chandan Datta}
\affiliation{Centre for Quantum Optical Technologies, Centre of New Technologies, University of Warsaw, Banacha 2c, 02-097 Warszawa, Poland}

\author{Tanmoy Biswas}
\affiliation{International  Centre  for  Theory  of  Quantum  Technologies,  University  of  Gdansk, Wita Stwosza 63, 80-308 Gda\'nsk,  Poland}
\affiliation{Institute of Theoretical Physics and Astrophysics, National Quantum
Information Centre, Faculty of Mathematics, Physics and Informatics,
University of Gdansk, Wita Stwosza 57, 80-308 Gda\'nsk, Poland}

\author{Debashis Saha}
\email{saha@cft.edu.pl}
\affiliation{Center for Theoretical Physics, Polish Academy of Sciences, Aleja Lotnik\'{o}w 32/46, 02-668 Warsaw, Poland}

\author{Remigiusz Augusiak}
\email{augusiak@cft.edu.pl}
\affiliation{Center for Theoretical Physics, Polish Academy of Sciences, Aleja Lotnik\'{o}w 32/46, 02-668 Warszawa, Poland}

\begin{abstract}
Distinguishing physical processes is one of the fundamental problems in quantum physics. Although distinguishability of quantum preparations and quantum channels have been studied considerably, distinguishability of quantum measurements remains largely unexplored. We investigate the problem of single-shot discrimination of quantum measurements using two strategies, one based on single quantum systems and the other one based on entangled quantum systems. First, we formally define both scenarios. We then construct sets of measurements (including non-projective) in arbitrary finite dimensions that are perfectly distinguishable within the second scenario using quantum entanglement, while not in the one based on single quantum systems. Furthermore, we show that any advantage in measurement discrimination tasks over single systems is a demonstration of Einstein–Podolsky–Rosen ‘quantum steering’. Alongside, we prove that all pure two-qubit entangled states provide an advantage in a measurement discrimination task over one-qubit systems.

\end{abstract}

\maketitle


\section{Introduction}

Distinguishability of different physical processes is a fundamental question in the field of quantum physics \cite{cs2019quantum}.  It all started with the seminal work on quantum state discrimination by Helstrom \cite{helstrom1976}, in which an upper bound on the probability of discrimination between the two states was derived. In recent years the problem of state discrimination has been explored extensively not only from the fundamental perspective but has also been studied due to its relevance for quantum information protocols such as quantum communication or quantum cryptography \cite{chefles2000, paris2004, bergou2007, bae2015}. 
It has also been studied in the context of resource theories of measurements
in the quantum mechanical \cite{OszBisl,UolaPRL} as well as the generalized probability theory \cite{TakPRX} scenario. On the other hand, via the well-known Choi-Jamio\l{}kowski isomorphism \cite{jamiolkowski1972, choi1975}, the state discrimination problem of Helstrom has been, quite naturally, translated to the problem of discrimination of quantum channels, whose various aspects have intensively been studied in recent years \cite{acin2001, sacchi2005, piani2005, pianiPRA, duan2009, harrow2010, ziman2010, piani2015}.
 
At the same time, the problem of quantum measurement discrimination remains relatively unexplored. It should be noted here that the most intriguing feature of the measurement discrimination problem, as compared to state discrimination, is that in it one can enhance the probability of distinguishing measurements by using quantum entanglement \cite{ji2006, ziman, puchala2018}. In \cite{ziman2008, ziman2009} unambiguous discrimination of quantum measurements is reported, where two shots are needed for perfect discrimination. Furthermore, in a recent experiment \cite{fiurasek2009} the optimal discrimination of two projective quantum measurements has been investigated. However, most of the previous results, are either restricted to two-dimensional quantum systems \cite{ji2006, ziman, ziman2008, fiurasek2009}, or have been studied in the multiple-shot scenario \cite{ji2006, OszMul}. Like in \cite{ziman}, here we consider the most practical single-shot scenario wherein only one copy of the measurement device is available, and the measurements are destructive, that is, we assume that there is no access to the post-measurement state. 

The main aim of this work is to explore the advantage in discriminating arbitrary dimensional measurements provided by entangled quantum systems. Precisely, we derive criteria allowing to decide whether a set of measurements in arbitrary dimension $d$ is perfectly distinguishable in entanglement-assisted scenario, but not within the single-system scenario for: (i) rank-one projective measurements, (ii) $d^2$ outcome positive operator-valued measurements (POVMs), and (iii) $d+1$ outcome POVMs. Moreover, we provide many classes of measurement of these types and further study the qualitative advantage obtained in entanglement-assisted scenario as compared to the single-system scenario. Remarkably, any advantage provided by the entangled systems in the measurement discrimination problem is a proof of `quantum steering'. Quantum steering  manifests the nonlocal effects of entangled states, which was scrutinized by Einstein-Podolsky-Rosen \cite{epr} and formalized later in \cite{wiseman}. Alongside, we show there exists a set of measurements (up to a local unitary equivalence) for which all pure two-qubit entangled states provide advantage. 

The paper is organised as follows. In Sec. \ref{framework}, we formulate the measurement distinguishability task in two different scenarios and identify the necessary resource for the advantage with entangled systems. The advantage of the entangled systems over the single systems is studied in Sec. \ref{advantage}. Finally, we conclude in Sec. \ref{conclusion} with many possibilities of future investigation.

\section{Framework for measurement distinguishability problem}\label{framework}

In this section, we formulate the quantum measurement distinguishability problem and then describe two scenarios in which it can be addressed.

Assume that we are given a measurement device that performs one of $n$ a priori known $m-$outcome measurements $M_x := \{M^a_{x}\}_a$, where $M^a_{x}$ stands for a measurement operator corresponding to the outcome $a$ of the $x$-th measurement with $x \in [n] = \{0,\dots,n-1\}$ and $a=[m]=\{0,\ldots,m-1\}$. These measurements are sampled from the probability distribution $p(x)$. We additionally assume that all the measurements act on $\mathbbm{C}^d$ where, in general, $d$ may not be equal to $m$.   
In order to distinguish the measurements, the measurement device is fed with a known quantum state that belongs to $\mathbbm{C}^d$ and the device performs one of the measurements $M_x$ with probability $p(x)$ on it. We assume that 
there is no access to the post-measurement state. The single-shot measurement distinguishability problem consists in maximizing the probability of correctly guessing which measurement has been performed based solely on the obtained outcome $a$.


Now, we formalize two different scenarios to address the above task: (i) the one in which the measurements are performed on a single quantum system (which in principle can be entangled to another quantum system, however, we do not have access to it), and (ii) the one exploiting quantum entanglement in which the particle going through our measurement device is quantum mechanically correlated to another system which we 
have access to and can measure it.

\subsection{Measurement distinguishability with single systems} 

To discriminate the measurement we consider the following strategy using single quantum system (Fig. \ref{fig single system}). Given a known quantum preparation $\rho$, the measurement device performs a measurement $M_x$ on it. Upon obtaining the outcome $a$ we can, in general, perform a classical post processing defined by $Q:=\{q(z|a)\}$ where $z\in \{0,\dots,n-1\}$ and $\forall a,\ \sum_z q(z|a)=1$. Post-processing is just a stochastic map that acts on the probability distribution $p(a|M_x,\rho)$ and returns $\bar{p}(z|M_x,\rho)$ as the output $z$ (the guess) is supposed to be equal to $x$. The optimal single-shot distinguishing probability of the measurement set, denoted by $\D$, is given by
\bea 
\D &=& \max_{\rho,\{q(z|a)\}} \sum_{x} p(x) \bar{p}(z=x|M_x,\rho) \nonumber \\
&=& \max_{\rho,\{q(z|a)\}} \sum_{x,a} q(z=x|a) p(x) \mbox{Tr}(\rho M^a_x) .
\eea  
We note that the term inside the summation is just a convex mixture of $p(x) \mbox{Tr}(\rho M^a_x)$ with weightage $q(z|a)$ for every $a$, and thus we choose the appropriate $\{q(z|a)\}$ which yields the optimal distinguishing probability. Therefore, the above quantity 
 can be expressed only in terms of $\mbox{Tr}(\rho M^a_x)$ as follows
\bea \label{Drho}
\D \label{D}
&= & \max_{\rho} \sum_a \max_x \left\{ p(x) \mbox{Tr}(\rho M^a_x) \right\} .
\eea 
Note that it is sufficient to consider only pure quantum states to obtain $\D $. This is because the $\max_x \{p(x)\mathrm{Tr}(\rho M^x_a)\}$ is a convex function of the state $\rho$. Following this argument we can infer that the maximum is always achieved by a pure quantum state. Let us finally mention that $\D=1$ means that the corresponding strategy perfectly distinguishes the measurements.

\begin{figure}[http]
\centering
\includegraphics[scale=0.25]{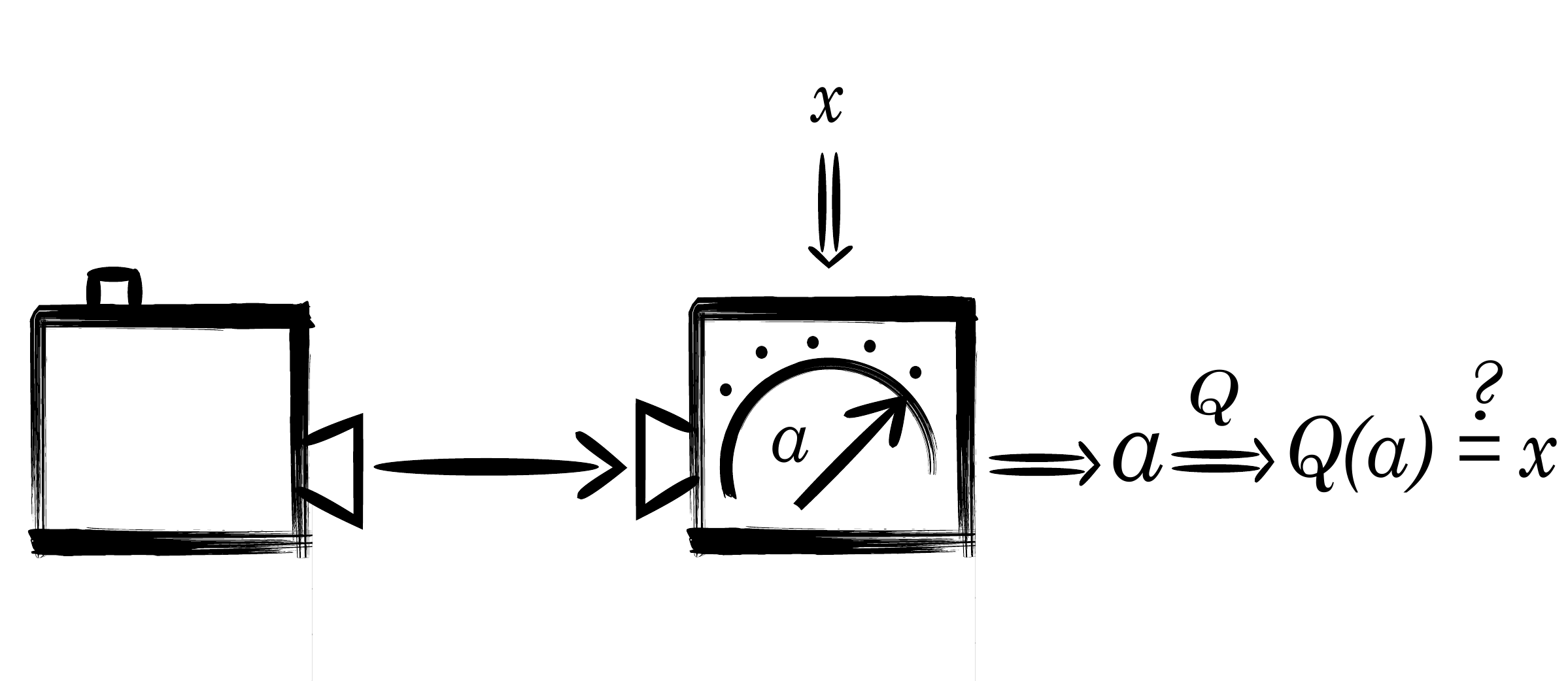}
\caption{\textbf{Schematic illustration of the single-shot measurement distinguishability problem in the single-system scenario}. A measurement device is fed with a known state $\rho$ and performs a measurement $M_x$ on it, yielding an outcome $a$. Depending on the outcome one then chooses the post-processing strategy to provide the best guess for the input $x$. The user has full control over the preparation device.}
\label{fig single system}
\end{figure}
 
\begin{figure}[http]
\centering
\includegraphics[scale=0.35]{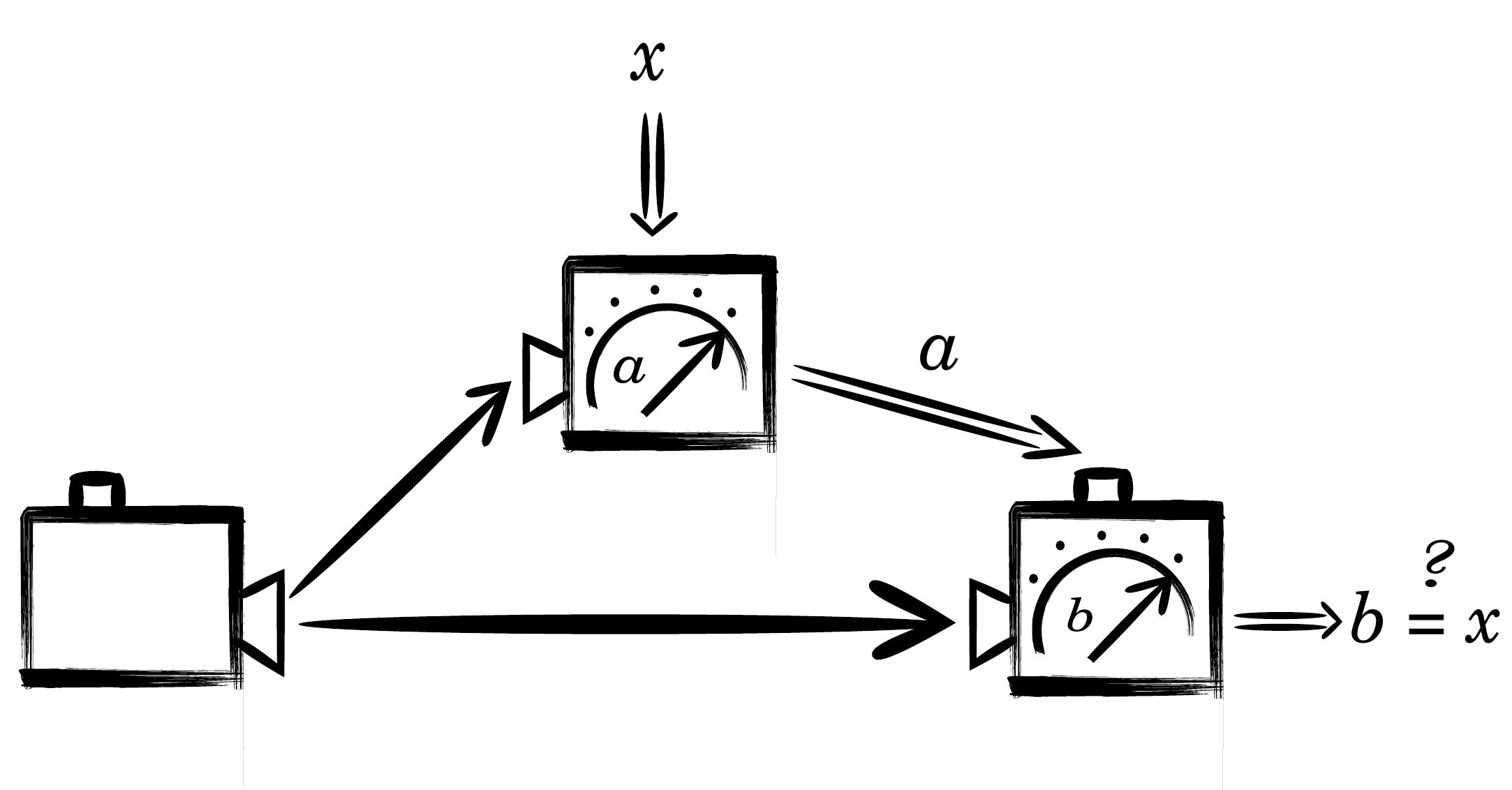}
\caption{\textbf{Schematic illustration of the entanglement-assisted measurement distinguishability problem.} A known bipartite state $\rho_{AB}$ is sent to two measurement devices held by Alice and Bob. However, there is a catch. Alice performs one of the measurement from the set $\{M_x\}$ and obtains a outcome $a$. Depending on that outcome Bob selects a measurement $N_y$ from the known set $\{N_y\}$, such that the outcome $b$ guesses the measurement setting $x$ of Alice. Devices with nob are fully controllable by the users.} 
\label{fig entanglement assisted}
\end{figure}

\subsection{Entanglement-assisted measurement distinguishability} 

As shown in Fig. \ref{fig entanglement assisted} the strategy is as follows. A known bipartite state $\rho_{AB}$ is sent to Alice and Bob. Alice performs the unknown measurement $M_x$ from the set $\{M_x\}$ on her sub-system and sends the outcome $a$ to Bob. Upon obtaining the outcome from Alice, Bob chooses to perform one of $m$ measurements, denoted $N_y$, which yields an outcome $b$. Bob's outcome is his guess of Alice's input. It is noteworthy that any classical post-processing of the outcome $a$ can be included into Bob's measurement. 

We can express the distinguishing probability in such entanglement-assisted scenario by a Bell expression.
Precisely, the experiment made by Alice and Bob can be described by the joint probabilities $p(a,b|x,y)$, where $x,y$ and $a,b$ are input and output variables, respectively. A general linear Bell expression in this scenario is
\begin{equation}\label{BellExp}
\sum_{x,y,a,b} c_{x,y,a,b} p(a,b|x,y),
\end{equation}
where $c_{x,y,a,b}$ are some real coefficients.

In the context of measurement distinguishability task, Alice's device performs $n$ $m$-outcome measurements, whereas Bob can choose to perform one of $m$ measurements, each having $n$ outcomes; so $x,b\in\{0,\ldots,n-1\}$ and $y,a\in \{0,\ldots,m-1\}$. Moreover, we are interested in only those cases where Bob's input $y$ is same as Alice's output $a$, and Bob's output $b$ should be the guess of Alice's input $x$. Thus, the distinguishing probability in the entanglement-assisted scenario is given by 
\be 
\sum_{x,a} p(x) p(a,b=x|x,y=a), 
\ee
and it is a particular instance of the Bell expression (\ref{BellExp}) with $c_{x,y,a,b} = p(x)$ when $y= a,b= x$, and $c_{x,y,a,b}=0$ otherwise.  In quantum theory, the entanglement-assisted distinguishing probability pertaining to a particular shared state $\rho_{AB}$ is expressed as
\be \label{qbxs}
\B^{\ \rho_{AB}} = \max_{\{N^b_y\}} \ \sum_{x,a} p(x) \mbox{Tr} \left[\rho_{AB} \left(M^a_x \otimes N^{b=x}_{y=a}\right)\right],
\ee 
where $\{M_x\}$ is the given set of quantum measurements (with probability $p(x)$) which we want to distinguish. 
The optimal entanglement-assisted distinguishing probability (denoted by $\B$) can be achieved by optimizing over all the bipartite states, that is,
\be \label{qbx}
\B = \max_{\rho_{AB}} \B^{\ \rho_{AB}} .
\ee
Analogously to the previous case, here the maximum is also achieved by a pure state. Note that the maximum value of $\B=1$ corresponds to the case of perfect distinguishability. In the following subsection we show that the advantage in a measurement distinguishability problem for a entangled bipartite system over a single system can be related to a necessary condition for quantum steering.

\subsection{Quantum steering is the necessary condition for advantage in a measurement distinguishability task}\label{necessary steering}

Let us now make a connection between the measurement distinguishability task and
quantum steering.
\begin{thm}\label{thm1}
Given any set of measurements $\{M_x\}$, $\B^{\ \rho_{AB}} > \D$ implies the shared state $\rho_{AB}$ is steerable by Bob.
\end{thm}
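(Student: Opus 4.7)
The plan is to prove the contrapositive: assume that $\rho_{AB}$ is unsteerable by Bob (i.e.\ $B\to A$ unsteerable) and show $\B^{\rho_{AB}}\leq \D$. By the definition of unsteerability, for every POVM $\{N^b_y\}$ Bob could perform, the induced assemblage on Alice's side
\begin{equation}
\sigma^A_{b|y}:=\tr_B\!\left[(\I\otimes N^b_y)\,\rho_{AB}\right]
\end{equation}
admits a local hidden-state decomposition
\begin{equation}
\sigma^A_{b|y}=\sum_\lambda p(\lambda)\, p(b|y,\lambda)\, \sigma^A_\lambda,
\end{equation}
where $p(\lambda)$ is a distribution on a hidden variable $\lambda$, $\{\sigma^A_\lambda\}$ are hidden quantum states on Alice's side, and $p(b|y,\lambda)$ are classical response functions.

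The first computational step is to push the partial trace over Bob's subsystem through Eq.~(\ref{qbxs}) and relabel indices using the task conventions $y=a$, $b=x$, giving
\begin{equation}
\B^{\rho_{AB}}=\max_{\{N^b_y\}}\sum_{x,a}p(x)\,\tr\!\left[M^a_x\,\sigma^A_{x|a}\right].
\end{equation}
Substituting the LHS decomposition and interchanging the summations, the argument of the outer maximum becomes, for any fixed Bob measurements,
\begin{equation}
\sum_\lambda p(\lambda)\sum_{x,a}p(x)\,q_\lambda(x|a)\,\tr\!\left[\sigma^A_\lambda\, M^a_x\right],
\end{equation}
where $q_\lambda(x|a):=p(b=x|y=a,\lambda)$. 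Since $\sum_x p(b=x|y=a,\lambda)=1$, the family $\{q_\lambda(x|a)\}$ is a bona fide stochastic post-processing of the kind considered in the single-system scenario.

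Comparing with the definition of $\D$ in Eq.~(\ref{Drho}), one immediately recognises the inner sum over $(x,a)$ at each $\lambda$ as precisely the single-system distinguishing probability obtained by preparing $\sigma^A_\lambda$ and post-processing the outcomes via $q_\lambda$; hence each inner sum is at most $\D$. Averaging over $\lambda$ with weights $p(\lambda)$ preserves the bound, and since the inequality holds for every $\{N^b_y\}$, the outer maximisation yields $\B^{\rho_{AB}}\leq\D$, which is the desired contrapositive.

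The main structural points that will need care are (i) fixing the direction of the steering asymmetry --- because Alice's measurements $\{M_x\}$ are the ones being identified while Bob freely optimises $\{N^b_y\}$, the LHS decomposition must live on Alice's side, i.e.\ the relevant notion is $B\to A$ unsteerability --- and (ii) noting that the response functions $p(b|y,\lambda)$ may in principle depend on the particular POVM chosen by Bob, but this is harmless because the LHS property is asserted for every such POVM, so the bound survives the outer maximisation. Beyond these conceptual points the calculation reduces to one partial-trace rearrangement and a single max--sum inequality, neither of which I expect to be a real obstacle.
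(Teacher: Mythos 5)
Your proof is correct and follows essentially the same route as the paper: both expand the correlations via the local-hidden-state decomposition on Alice's side and then recognise the resulting $\lambda$-wise expression as a single-system strategy (hidden state plus classical post-processing), bounded by $\D$. The only cosmetic difference is that you phrase the argument through assemblages and keep the response functions stochastic, whereas the paper works with the joint probabilities directly and reduces to deterministic responses before applying the max--sum inequality.
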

\begin{proof}
If the shared state $\rho_{AB}$ has a local hidden-state (LHS) model for any choice of measurements in a given scenario where Alice has the quantum device, then for all $a,b,x,y,$ the joint probabilities obtained from that state can be written as \cite{wiseman},
\be \label{plhs}
p(a,b|x,y) = \sum_\lambda \mu(\lambda)  \mbox{Tr}(\rho_\lambda M^a_x) p_B(b|y,\lambda),
\ee 
where $\sum_\lambda\mu(\lambda)=1$, $p_B(b|y,\lambda)$ represents arbitrary probability distributions depending on the hidden variable $\lambda$, which, without any loss of generality, can always be assumed to be deterministic. Finally, $\mbox{Tr}(\rho_\lambda M^a_x)$ corresponds to the probability of obtaining the outcome $a$ when the measurement $x$ is performed on the hidden state $\rho_\lambda$. Whenever the joint probabilities cannot be expressed in the above form, then the shared state is steerable from Bob to Alice. Now, let us define the distinguishing probability in entanglement-assisted scenario when the shared state admits a LHS model, 
\be 
\B' = \max_{\rho_{AB}\in \mathcal{LHS}} \ \sum_{x,a} p(x) p(a,b=x|x,y=a),
\ee
wherein $\mathcal{LHS}$ denotes the set of states admitting LHS model so that the joint probabilities are given by Eq. \eqref{plhs}.
To prove the desired result, it suffices to show $\B' = \D$. 
Notice that by using Eq. \eqref{plhs} we obtain
\bea  
&&\sum_{a,x} p(x) \sum_\lambda \mu(\lambda) \mathrm{Tr}(\rho_\lambda M^a_x) p_B(b=x|y=a,\lambda) \nonumber \\
&=& \sum_{a}\sum_\lambda \mu(\lambda) \left( \sum_x p(x)  \mathrm{Tr}(\rho_\lambda M^a_x) p_B(b=x|y=a,\lambda) \right) \nonumber \\
&\leqslant &  \sum_a \sum_\lambda \mu(\lambda) \max_x \left\{ p(x) \mathrm{Tr}(\rho_\lambda M^a_x) \right\}, 
\eea  
where the inequality stems from the fact that $p_{B}(b|y,\lambda)$ is a deterministic probability distribution for any choice of the measurement $y$ and the variable $\lambda$. The above clearly implies that
\begin{equation}
    \mathcal{B}'\leqslant \max_{\rho}\sum_a \max_x \left\{ p(x) \mathrm{Tr}(\rho M^a_x) \right\},
\end{equation}
and thus, $\mathcal{B}'\leqslant \mathcal{D}$ [cf. Eq. (\ref{Drho})]. Furthermore, the upper bound of $\B'$ in the above equation is achieved within the entanglement-assisted scenario by considering Bob's measurements to be a classical post-processing on the outcome $a$, and hence, $\B' = \D$. This completes the proof.
\end{proof}
Hence, a steerable state provides an advantage in the measurement distinguishability task over the single-system scenario. In fact, any set of measurements $M_x$ gives rise to a steering inequality $\mathcal{B}\leqslant \mathcal{D}$ whose violation indicates that the entanglement-assisted scenario is advantageous over the single-system one in the measurement distinguishing task.

\section{Advantage of the entanglement-assisted scenario over the single-system one}\label{advantage}

In this article, we mostly restrict ourselves to instances where the measurements are perfectly distinguishable with the maximally entangled state, and they are drawn from a uniform ensemble. In the following, we discuss the advantage of using quantum entanglement in a measurement discrimination problem for different measurement scenarios.

\subsection{Advantage for rank-one projective measurements}

\begin{thm}\label{advantage projective}
A set of $d$ distinct rank-one projective measurements in dimension $d$ defined by the vectors $\{|v_x^a\ra \}$ where $x,a \in \{0,\dots,d-1\}$ ($x,a$ denote measurement setting and outcome respectively), is perfectly distinguishable in entanglement-assisted scenario but not perfectly distinguishable with single system, whenever the vectors satisfy the following relations,
\begin{enumerate}
    \item $\forall x,x',a, \ \la v_x^a|v^a_{x'}\ra = \delta_{x,x'}$ \ ,
    \item there exists $a,a'$ such that $|\la v^a_x|v^{a'}_{x'} \ra| < 1 $ for all $x,x'$.
\end{enumerate}
\end{thm}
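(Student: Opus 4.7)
The plan has two halves: proving perfect distinguishability in the entanglement-assisted scenario, and proving the impossibility in the single-system scenario. The first half is where the structural content lies, while the second is a short convexity/orthogonality argument.

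For the entanglement-assisted half, I would use the shared maximally entangled state $|\phi^+\rangle=\frac{1}{\sqrt d}\sum_i|i\rangle|i\rangle$ and exploit its ``ricochet'' identity $(A\otimes I)|\phi^+\rangle=(I\otimes A^T)|\phi^+\rangle$. When Alice's measurement is $M_x$ and her outcome is $a$, Bob's unnormalised conditional state is proportional to $|(v_x^a)^*\rangle$. By Condition 1, for each fixed outcome $a$ the vectors $\{|v_x^a\rangle\}_x$ form an orthonormal basis of $\mathbb{C}^d$, and therefore so do their conjugates. Hence I would define Bob's measurement for input $y=a$ as the projective measurement $N_a=\{|(v_x^a)^*\rangle\langle(v_x^a)^*|\}_x$, with the outcome label serving as the guess $b=x$. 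A direct computation using the ricochet identity then gives
\begin{equation}
\mathrm{Tr}\!\left[|\phi^+\rangle\langle\phi^+|\bigl(|v_x^a\rangle\langle v_x^a|\otimes|(v_x^a)^*\rangle\langle(v_x^a)^*|\bigr)\right]=\frac{1}{d},
\end{equation}
and summing over $a,x$ with uniform prior $p(x)=1/d$ yields $\mathcal{B}^{|\phi^+\rangle}=1$.

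For the single-system half, I would use Eq.~(\ref{Drho}) together with the fact that the optimum is attained on a pure state $|\psi\rangle$. Since Condition 1 makes each $\{|v_x^a\rangle\}_x$ an orthonormal basis, $\sum_x|\langle\psi|v_x^a\rangle|^2=1$ for every $a$, so $\max_x|\langle\psi|v_x^a\rangle|^2\leqslant 1$ with equality only when $|\psi\rangle$ is proportional to one of the basis vectors $|v_{x_a}^a\rangle$. Therefore $\mathcal{D}=1$ would force, for every outcome label $a$, the existence of some $x_a$ with $|\psi\rangle=e^{i\theta_a}|v_{x_a}^a\rangle$. Applied to the indices $a,a'$ provided by Condition 2, this gives $|\langle v_{x_a}^a|v_{x_{a'}}^{a'}\rangle|=1$, contradicting the hypothesis that this overlap is strictly less than one for all pairs. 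Hence $\mathcal{D}<1$.

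The main obstacle, such as it is, will be the entanglement-assisted half: one needs to spot that Condition 1 is exactly what guarantees that Bob's reduced ensembles form orthonormal bases after the ricochet, making them perfectly distinguishable by a projective measurement. Once the right measurement is identified, everything reduces to a one-line trace computation. The single-system half is then just a saturation-of-Cauchy--Schwarz argument combined with Condition 2. I would conclude by noting that, in combination with Theorem~\ref{thm1}, this also certifies that the maximally entangled state is steerable with respect to the measurement set $\{M_x\}$.
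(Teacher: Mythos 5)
Your proposal is correct and follows essentially the same route as the paper: the same choice of Bob's measurement $\{(|v_x^a\ra\!\la v_x^a|)^T\}_x = \{|(v_x^a)^*\ra\!\la (v_x^a)^*|\}_x$ on the maximally entangled state via the ricochet identity for the first half, and the same observation that $\D=1$ would force $|\psi\ra$ to coincide (up to phase) with some $|v^a_{x_a}\ra$ for every $a$, contradicting Condition 2, for the second half. No gaps.
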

\begin{proof}We show that in the entanglement-assisted scenario there exists a quantum strategy (Bob's measurements along with an entangled state) achieving one
in Eq. \eqref{qbxs}, whereas in the single-system scenario for any quantum strategy the value of (\ref{D}) does not reach one.

In the entanglement-assisted scenario let us consider that Alice and Bob share the maximally entangled state in $\mathbbm{C}^d\otimes\mathbbm{C}^d$,
\begin{equation}\label{maxent}
|\phi^+\ra=\frac{1}{\sqrt{d}}\sum_{i=0}^{d-1}|ii\rangle
\end{equation}
and that Bob's measurements are defined by the following measurement operators
\be 
N_a^x = (|v_x^a\ra\!\la v^a_x|)^T,
\ee 
where $T$ stands for the transposition in the standard basis.
Taking into account the first condition and the fact that transposition respects all the properties of a projector, we have $N_a^xN_a^{x'} = \delta_{x,x'}N_a^x$ for all $a,x,x'$ as well as $\sum_x N_a^x = \I $.

Let us now recall the following property of the maximally entangled state. 
\begin{fact}\label{fact:phi+}
For any two operators $A,B$ acting on $\mathbbm{C}^d$, $A\otimes B|\phi^+\ra = \I\otimes BA^T|\phi^+\ra$.
\end{fact}
Using it and taking $p(x)=1/d$ in the expression of $\mathcal{B}$ in Eq. \eqref{qbxs} we obtain,
\bea 
\hspace{-1cm}\B^{|\phi^+\rangle} &=& \frac{1}{d} \sum_{x,a} \la \phi^+ | |v^a_{x}\ra\!\la v^a_x| \otimes (|v^a_{x}\ra\!\la v^a_x|)^T |\phi^+\ra \nonumber \\
&=& \frac{1}{d} \sum_{a} \la \phi^+ | \I \otimes \underbrace{\bigg( \sum_x|v^a_{x}\ra\!\la v^a_x| \bigg) ^T}_{\I} |\phi^+\ra = 1.
\eea
On the other hand, the distinguishability given in Eq. \eqref{D} in this case can be stated as
\be \label{D projective}
\D = \frac{1}{d}\max_{|\psi\ra} \sum_a \max_x |\la \psi|v_x^a\ra |^2, \ee 
and it amounts to one if, and only if there exists $\ket{\psi}$ such that the following conditions
\be 
\max_x |\la \psi|v_x^a\ra |^2 = 1 
\ee 
hold true with $a=0,\ldots,d-1$. Imagine then that there exists such a state $\ket{\psi}$ that the above conditions are satisfied. Then, it is not difficult to see that for any pair $a,a'$ such that $a\neq a'$ there exist $x$ and $x'$ such that 
\begin{equation}
    |\psi\ra = |v^a_x\ra = |v^{a'}_{x'}\ra,
\end{equation}
which certainly contradicts assumption 2 of the theorem.
This completes the proof. 
\end{proof}

\subsubsection{Example}
 
As we show in Appendix \ref{AppA} in the simplest cases $n=d=2,3$ there are no projective measurements satisfying the two conditions of Theorem \ref{advantage projective}, and the first nontrivial case for which such measurements can be constructed is $n=d=4$. An exemplary choice of the corresponding vectors $\ket{v_x^a}$ is presented in the table below.
\begin{widetext}
\begin{center}
 \begin{table}[h]
 \label{TabI}
\begin{tabular}{|c|c|c|c|c|}
\hline
& $a=0$ & $a=1$  & $a=2$  & $a=3$  \\ \hline
$x=0$ & $|0\ra$  & $|1\ra$ & $|2\ra$ & $|3\ra$ \\ \hline 
$x=1$ & $(|1\ra-|2\ra)\sqrt{2}$  & $(|0\ra+|3\ra)\sqrt{2}$ & $(|0\ra-|3\ra)\sqrt{2}$ & $(|1\ra+|2\ra)\sqrt{2}$  \\ \hline
$x=2$ & $(|1\ra+|2\ra+|3\ra)/\sqrt{3}$  & $(|0\ra+|2\ra-|3\ra)/\sqrt{3}$ & $(|1\ra-|3\ra-|0\ra)/\sqrt{3}$ & $(|1\ra-|2\ra+|0\ra)/\sqrt{3}$  \\ \hline
$x=3$ & $(|1\ra+|2\ra-2|3\ra)\sqrt{6}$  & $(-|0\ra+2|2\ra+|3\ra)\sqrt{6}$ & $(2|1\ra+|3\ra+|0\ra)\sqrt{6}$ & $(|1\ra+|2\ra-2|0\ra)\sqrt{6}$  \\ \hline
\end{tabular}
\caption{An example of a list of the vectors $|v_x^a\ra$ in dimension $4$ satisfying the two sets of relations given in Theorem \ref{advantage projective}.}
\end{table}
\end{center}
\end{widetext}
One can verify that the above set of vectors satisfy the conditions given in Theorem \ref{advantage projective} and hence $\D<1$.

\subsubsection{Numerical optimization to find the value of $\D$ } \label{sec:nu}
In order to find the maximum value of $\D$, we consider the following parameterization of a pure state in $\mathbbm{C}^d$ \cite{bengtsson}
\begin{eqnarray}
    |\psi\ra&=&\cos\theta_1 |0\ra + \sum^{d-2}_{k=1} \left[ \left( \prod^{k}_{i=1}  \sin\theta_i \right) \cos\theta_{k+1} \  \mbox{e}^{\mathbbm{i}\nu_{k}} |k\ra \right] \nonumber\\
    && + \left( \prod^{d-1}_{i=1}  \sin\theta_i \right) \mbox{e}^{\mathbbm{i}\nu_{d-1}} |d-1\ra,
\end{eqnarray}
where $\theta_i\in [0,\pi/2]$ and $\nu_i\in [0,2\pi]$ (for $i=1,\dots,d-1$) are $2(d-1)$ number of unknown parameters. Given any set of measurements $\{M_x\}$, we can obtain the value of $\D$ by performing a numerical optimization of the expression \eqref{D} over the parameters, $\theta_i$ and $\nu_i$. For the exemplary set of four-dimensional measurements given in Table \ref{TabI}, we find $\D\approx 0.7752$ after carrying out such optimization for Eq. (\ref{D projective}). Since, there are only six parameters in this case, this value is expected to be the global maximum.
Therefore, we have a clear advantage in distinguishing the projective measurements with the maximally entangled state. We follow this procedure to find the values of $\D$ in the other examples given later.

In the following subsections we discuss the advantage of entanglement-assisted scenario for more general measurements or POVMs.

\subsection{Advantage for $\boldsymbol{d^2-}$outcome POVMs}

Given a dimension $d$, let us denote by $\w = \exp(2\pi \mathbbm{i}/d)$ the $d$-th root of unity, and define the following unitary matrices
\be \label{ZX}
Z = \sum^{d-1}_{i=0} \w^{i}|i\ra\!\la i|, \quad X = \sum^{d-1}_{i=0} |i+ 1\ra\!\la i|,
\ee 
where the sum($+$) is taken to be modulo $d$ sum. Let us then introduce the following set of $d^2$ unitary matrices,
\be \label{Ukl}
U_{k,l} = X^kZ^l,
\ee 
where $k,l = 0,\dots,d-1$. 

We can now state one of our main results. 
\begin{thm} \label{result:dd}
Given any set of orthonormal vectors $\{|v_i\ra\}^{d-1}_{i=0}$ such that for at least one pair $k,l$,
\be \label{cond}
\forall i,j, \quad | \la v_j|U_{k,l}|v_i\ra | < 1,
\ee 
there exists a set of $d$ POVMs, each having $d^2$ outcomes, that are perfectly distinguishable in entanglement-assisted scenario but not perfectly distinguishable in the single-system scenario. 
\end{thm}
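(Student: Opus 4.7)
The natural construction for the measurement operators is to exploit the action of the generalized Pauli (Heisenberg--Weyl) operators $U_{k,l}$ on the given orthonormal basis. I would define the $x$-th POVM (for $x\in\{0,\dots,d-1\}$) by setting
\[
M_x^{(k,l)} \;=\; \frac{1}{d}\, U_{k,l}\,|v_x\rangle\!\langle v_x|\, U_{k,l}^{\dagger}, \qquad k,l\in\{0,\dots,d-1\},
\]
so that the outcomes are labelled by the pair $(k,l)$, yielding $d^2$ outcomes per measurement. The completeness $\sum_{k,l}M_x^{(k,l)}=\I$ follows from the standard twirl identity $\sum_{k,l} U_{k,l}\,A\,U_{k,l}^{\dagger}=d\,\mathrm{Tr}(A)\,\I$ for the $U_{k,l}$ in \eqref{Ukl}, applied to $A=|v_x\rangle\!\langle v_x|$.

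For the entanglement-assisted side, I would let Alice and Bob share the maximally entangled state \eqref{maxent} and let Bob use, conditioned on Alice's reported outcome $y=(k,l)$, the rank-one projective measurement
\[
N_{(k,l)}^{b} \;=\; \bigl(U_{k,l}\,|v_b\rangle\!\langle v_b|\, U_{k,l}^{\dagger}\bigr)^{T}, \qquad b\in\{0,\dots,d-1\}.
\]
Unitarity of $U_{k,l}$ plus orthonormality of $\{|v_i\rangle\}$ make each $\{N_{(k,l)}^{b}\}_{b}$ a valid projective measurement. Using Fact~\ref{fact:phi+} and the identity $\langle\phi^+|\,\I\otimes A\,|\phi^+\rangle=\mathrm{Tr}(A)/d$, I would show that each summand in \eqref{qbxs} with $p(x)=1/d$ evaluates to $1/d^2$, and summing over the $d\cdot d^2$ matched tuples gives $\mathcal{B}^{|\phi^+\rangle}=1$.

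For the single-system side, inserting $p(x)=1/d$ and the explicit $M_x^{(k,l)}$ into \eqref{D} reduces $\mathcal{D}=1$ to the requirement that there exists a pure state $|\psi\rangle$ with $\max_x|\langle\psi|U_{k,l}|v_x\rangle|^2=1$ for every pair $(k,l)$, because all $d^2$ terms (each bounded by $1$) must saturate simultaneously. This would force $|\psi\rangle\propto U_{k,l}|v_{x(k,l)}\rangle$ for some choice $x(k,l)$ and every $(k,l)$. Comparing the trivial pair $(0,0)$ (where $U_{0,0}=\I$) with the pair $(k,l)$ singled out by hypothesis \eqref{cond} gives $|\langle v_{x(0,0)}|U_{k,l}|v_{x(k,l)}\rangle|=1$, contradicting the assumption. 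Hence $\mathcal{D}<1$.

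The only step that requires genuine care is the single-system argument: one must justify that saturation of the sum of $d^2$ terms bounded by $1$ really forces each term to equal $1$, and then translate that into the rigid equality $|\psi\rangle\propto U_{k,l}|v_{x(k,l)}\rangle$ that clashes with \eqref{cond}. The POVM-completeness and the $\mathcal{B}^{|\phi^+\rangle}=1$ computation are routine given Fact~\ref{fact:phi+} and the Heisenberg--Weyl twirl.
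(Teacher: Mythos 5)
Your proposal is correct and follows essentially the same route as the paper: the same POVMs $M_x^{(k,l)}=\frac{1}{d}U_{k,l}|v_x\rangle\!\langle v_x|U_{k,l}^{\dagger}$, the same transposed projectors for Bob with the maximally entangled state, and the same single-system argument comparing the pair $(0,0)$ against the pair $(k,l)$ guaranteed by \eqref{cond}. The step you flag as needing care is handled in the paper exactly as you describe, so there is nothing to add.
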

\begin{proof}
Consider the following positive semi-definite 
rank-one operators acting on $\mathbbm{C}^d$, 
%
%
\be \label{Mkl}
M^{k,l}_x = \frac{1}{d} \ U_{k,l} |v_x\ra\!\la v_x| U_{k,l}^\dagger 
\ee 
where $x=0,\dots,d-1$ and $a = (k,l)$ denote the measurement settings and the outcomes, respectively. Using the following relation (see, e.g., Ref. \cite{Dariano}) 
\be 
\sum^{d-1}_{k,l=0} U_{k,l} \ \Xi \ U_{k,l}^\dagger = d \ \tr(\Xi) \I ,
\ee 
that holds true for any operator $\Xi$ acting on $\mathbbm{C}^d$, 
one easily finds that 
\begin{equation}
\sum_{k,l=0}^{d-1} M^{k,l}_x = \I,
\end{equation}
and thus the $d^2$ operators $M^{k,l}_x$ with $k,l=0,\ldots,d-1$ form a valid POVM for each $x$. 

Our aim now is to show that the above generalized measurements are perfectly distinguishable in the entanglement-assisted scenario, while not in the single-system scenario. To show that $\B =1$, let Alice and Bob share the maximally entangled state of two qudits $|\phi^+\ra$ (\ref{maxent}). Moreover, let Bob's measurements be given by
\be \label{Nkl}
N^x_{k,l} = (U_{k,l} |v_x\ra\!\la v_x| U_{k,l}^\dagger )^T = d (M^{k,l}_x)^T .
\ee 
It is trivial to see that $N^x_{k,l}N^{x'}_{k,l} = \delta_{x,x'} N^x_{k,l}$ for all $k,l$ since $|v_x\ra$ are orthogonal. Further, for any pair $k,l$,
\be 
\sum_{x} N^x_{k,l} = \left(U_{k,l} \sum_x |v_x\ra\!\la v_x| U_{k,l}^\dagger\right)^T = \I,
\ee 
thereby, for any pair $k,l$, the operators $N^x_{k,l}$ form a valid projective measurement. Let us then calculate the following quantity taking $p(x)=1/d$,
\bea
\B^{|\phi^+\rangle} &=& \frac{1}{d} \sum_{x,(k,l)} p((k,l),x|x,(k,l)) \nonumber \\
&=& \frac{1}{d} \sum_{x,k,l} \la \phi^+ | M^{k,l}_{x} \otimes N^{x}_{k,l} |\phi^+\ra \nonumber \\
&=& \frac{1}{d^2} \sum_{x,k,l} \la \phi^+ | (N^{x}_{k,l})^T \otimes N^{x}_{k,l} |\phi^+\ra,
\eea 
where we have used the relation in \eqref{Nkl}. Using Fact \ref{fact:phi+} the above expression can be further simplified as,
\bea 
\B^{| \phi^+\rangle} &=& \frac{1}{d^2} \sum_{x,k,l} \la \phi^+ | (\I \otimes |v_{x}\ra\!\la v_x|) |\phi^+\ra \nonumber \\
&=& \frac{1}{d^2} \sum_{k,l} \la \phi^+ | (\I \otimes \underbrace{\sum_x|v_{x}\ra\!\la v_x|}_{\I}) |\phi^+\ra = 1 .
\eea 
On the other hand, to show that $\D$ is strictly less than 1, 
let us first see that 
\begin{equation}
\la \psi| M^{k,l}_x|\psi\ra = \frac{1}{d} | \la\psi|U_{k,l}|v_x\ra|^2 
\end{equation}
for an arbitrary pure state $|\psi\ra\in\mathbbm{C}^d$. Subsequently, we have 
\be 
\D = \frac{1}{d^2}\max_{|\psi\ra} \sum_{k,l} \max_{x} \left\{ | \la\psi|U_{k,l}|v_x\ra|^2 \right\} .
\ee 
The above expression is 1, if and only if there exists a state $\ket{\psi}$
such that 
\be \label{D1cond}
\max_x\left\{ | \la\psi|U_{k,l}|v_x\ra|^2 \right\} =1
\ee 
is satisfied for any pair $k,l$.
For $k=l=0$, $U_{00}=\mathbbm{1}$ and therefore Eq. \eqref{D1cond} implies $|\psi\ra = |v_{x'}\ra$ (up to a phase)
for some $x'$. So, we can replace $|\psi\ra$ by $|v_{x'}\ra$ in Eq. \eqref{D1cond} for the other pairs $k,l$. Note that Eq. \eqref{cond} does not hold for $k=l=0$. So, there exists another pair $k,l$ for which \eqref{cond} must hold, and for that $k,l,$ Eq. \eqref{D1cond} is not satisfied. Hence, $\D < 1$ which completes the proof. 
\end{proof}

\subsubsection{Existence of $|v_x\ra$ in all dimension}

Let us now explore whether one can find an orthonormal basis $\ket{v_i}$ $(i=0,\ldots,d-1)$ in $\mathbbm{C}^d$ for any $d\geqslant 2$ that satisfies 
the condition (\ref{cond}).
To this end, we prove the following fact.

%
\begin{proposition} The eigenvectors of any unitary $U$ satisfy the condition \eqref{cond}, whenever $U$ and $X^kZ^l$ do not share any common eigenvector for all $k,l,$ except $k=l=0$.\end{proposition}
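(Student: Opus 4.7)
The plan is to argue by contrapositive: assuming condition~\eqref{cond} is violated for every $(k,l)\neq(0,0)$, I will produce a common eigenvector of $U$ and some non-identity $U_{k,l}$, contradicting the hypothesis. The preliminary observation is routine: since $U_{k,l}$ is unitary and $\{|v_i\rangle\}$ is orthonormal, the equality $|\langle v_j|U_{k,l}|v_i\rangle|=1$ is equivalent to $U_{k,l}|v_i\rangle=\alpha|v_j\rangle$ for some phase $\alpha$.

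My main tool will be a double-counting argument. For each ordered pair $(i,j)$ of basis labels, let $n_{i,j}$ denote the number of pairs $(k,l)\in\{0,\dots,d-1\}^2\setminus\{(0,0)\}$ for which $U_{k,l}|v_i\rangle\in\mathbbm{C}|v_j\rangle$. Summing over $(k,l)$ and using the failure of~\eqref{cond} at every non-trivial $(k,l)$ yields the lower bound $\sum_{i,j}n_{i,j}\geq d^2-1$, since each such $(k,l)$ contributes at least one term to the total.

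For the upper bound I invoke the no-common-eigenvector hypothesis twice. First, $n_{i,i}\geq 1$ would make $|v_i\rangle$ an eigenvector of some non-identity $U_{k,l}$ as well as of $U$, which is forbidden; so $n_{i,i}=0$. Second, $n_{i,j}\geq 2$ for some $i\neq j$ would supply two distinct pairs $(k,l)\neq(k',l')$ both mapping $|v_i\rangle$ into $\mathbbm{C}|v_j\rangle$, whence $U_{k',l'}^{-1}U_{k,l}|v_i\rangle\in\mathbbm{C}|v_i\rangle$. The Weyl commutation $ZX=\omega XZ$ lets me rewrite $U_{k',l'}^{-1}U_{k,l}$ as a phase multiple of $U_{k-k',\,l-l'}$; since $(k-k',l-l')\not\equiv(0,0)\pmod d$, this exhibits $|v_i\rangle$ as a forbidden common eigenvector, so $n_{i,j}\leq 1$ for $i\neq j$.

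Combining the bounds gives $d^2-1\leq \sum_{i,j}n_{i,j}\leq 0+d(d-1)=d^2-d$, i.e.\ $d\leq 1$, contradicting $d\geq 2$. The main obstacle — and really the only place where the specific Heisenberg-Weyl structure is used — is the composition step: I must carefully track the $\omega$-phases picked up when pushing $X$-powers past $Z$-powers in order to confirm that $U_{k',l'}^{-1}U_{k,l}$ is genuinely a scalar multiple of a single element $U_{a,b}$ with $(a,b)\not\equiv(0,0)\pmod d$. This reduces to a direct application of $ZX=\omega XZ$, after which the counting yields the result.
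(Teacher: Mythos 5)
Your proof is correct and follows essentially the same route as the paper's: negate the condition, use the no-common-eigenvector hypothesis to force $i\neq j$, apply a counting/pigeonhole argument over the $d^2-1$ nontrivial $U_{k,l}$, and derive a contradiction by composing two operators that hit the same ordered pair. Your explicit ordered-pair bookkeeping ($n_{i,j}\leq 1$, giving $d^2-1\leq d^2-d$) is in fact a slightly tidier version of the paper's $\binom{d}{2}$ pigeonhole, but the underlying idea is identical.
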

\begin{proof}
We prove this statement by contradiction.
Say $|v_i\ra $ are the eigenvectors of $U$. 
Negation of \eqref{cond} states that for all $k,l$ there exists $i,j$ such that 
\be \label{facteq1}
X^kZ^l|v_i\ra = \mbox{e}^{\mathbbm{i}\theta_{k,l}}|v_j\ra 
\ee 
for some $\theta_{k,l}\in\mathbbm{R}$. Since they do not share any common eigenvector, $i$ must be different than $j$ for all $k,l$ such that $k\neq 0$ or $l\neq 0$. Note that there are $d^2-1$ operators $X^kZ^l$ for which \eqref{facteq1} holds. On the other hand, there are at most $d$ distinct eigenvectors of $U$ which means there are at most ${d \choose 2} = d(d-1)/2$ possibilities to choose two distinct vectors among them. Thus, for Eq. \eqref{facteq1} to hold where $i\neq j$, there are at least two different operators, say $X^kZ^l$ and $X^{k'}Z^{l'}$, such that there exist the same pair $i,j$ for which
\be 
X^kZ^l|v_i\ra = \mbox{e}^{\mathbbm{i}\theta}|v_j\ra, \  X^{k'}Z^{l'}|v_i\ra = \mbox{e}^{\mathbbm{i}\theta'}|v_j\ra ,
\ee 
for some $\theta,\theta'\in\mathbbm{R}$.
Taking the conjugate transpose of the first one and multiplying with the second, we get,
\be 
| \la v_i|X^{k'-k}Z^{l'-l} |v_i\ra | = 1 .
\ee 
Now we arrive at a contradiction since the above implies $|v_i\ra$ is an eigenvector of $X^{k'-k}Z^{l'-l}$. 
\end{proof}
Many examples of such unitary can be found in all dimension. Let us mention one of those,
\be 
U_d=\sum_{i=0}^{d-1}\omega^{i+\frac{1}{2}}|i\rangle\!\langle i| -\frac{2}{d}\sum_{i,j=0}^{d-1}(-1)^{\delta_{i,0}+\delta_{j,0}}\omega^{\frac{i+j+1}{2}}|i\rangle\!\langle j| ,
\ee 
that has been discussed in \cite{sarkar2019selftesting}.

Let us then present another example in Hilbert spaces of dimension $d=2^r$ for any $r\geqslant 2$. To this end, consider the following states 
\begin{equation}\label{magic state}
|v_0\rangle=\cos\beta |0\rangle+\mbox{e}^{\mathbbm{i} \alpha}\sin\beta|1\rangle\, \end{equation}
and
\begin{equation}
|v_1\rangle=\mbox{e}^{-\mathbbm{i} \alpha}\sin\beta |0\rangle-\cos\beta|1\rangle,
\end{equation}
where $\beta=(0,\pi/4)$. These two states form a basis in $\mathbbm{C}^2$. Subsequently, by taking their tensor products we can create a basis in $\mathbbm{C}^d$ with $d=2^r$ in the following way
\be 
|v_{x_1\ldots x_r}\rangle = \bigotimes^r_{i=1} |v_{x_i}\rangle,
\ee 
where $x_1\cdots x_r=0,1$.
For instance, the vectors $|v_0\rangle\otimes|v_0\rangle$, $|v_0\rangle\otimes|v_1\rangle$,  $|v_1\rangle\otimes|v_0\rangle$ and  $|v_1\rangle\otimes|v_1\rangle$ form a basis in the two-qubit Hilbert space. 

Now, it is easy to check that $U_{0,1}=Z$ can be decomposed as 
as the following tensor product
\begin{eqnarray}\label{Z factorize}
Z=\bigotimes_{i=1}^{r}V_i,
%
\end{eqnarray}
where $V_i$ is a $2\times 2$ unitary matrix of the form $V_i=\mathrm{diag}(1,\omega^{2^{r-i-1}})$. 
One can easily verify that for any $k$, $|\langle v_i |V_k|v_j\rangle|<1$ with $i,j=0,1$. Therefore 
\begin{equation}
 | \la v_{x_1\ldots x_r}|Z|v_{y_1\ldots y_r}\ra | < 1   
\end{equation}
for any configuration of $x_1,\ldots,x_r,y_1,\ldots,y_r=0,1$.
Hence, the entanglement-assisted scenario provides advantage over a single system. Moreover, using Eq. (\ref{Z factorize}) a similar conclusion can be drawn for the powers of $Z$, which allows to further lower the value of $\D$. 
%
%


\subsection{Advantage for informationally complete POVMs}

Interestingly, the POVMs defined in \eqref{Mkl} are informationally complete (IC) when the orthogonal vectors satisfy the following condition \cite{Dariano},
\be \label{icpovmcond}
\forall i,k,l,\quad | \la v_i|U_{k,l}|v_i\ra |\neq 0.
\ee 

\begin{proposition}
The set of IC-POVMs constructed from the orthogonal vectors stated in \eqref{Mkl} are also perfectly distinguishable in entanglement-assisted scenario but not perfectly distinguishable with single system. 
\end{proposition}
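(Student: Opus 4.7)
The plan is to reuse the machinery already built for Theorem \ref{result:dd}, observing how the IC hypothesis \eqref{icpovmcond} interacts with the rank-one structure of the POVMs given by \eqref{Mkl}. The claim splits into two pieces, $\B=1$ and $\D<1$, and the first one is essentially free. Inspecting the proof of Theorem \ref{result:dd}, the computation showing $\B^{|\phi^+\rangle}=1$ only invokes orthonormality of $\{|v_x\rangle\}$ together with Fact \ref{fact:phi+}; neither the hypothesis \eqref{cond} of Theorem \ref{result:dd} nor the new IC hypothesis \eqref{icpovmcond} is used. So Alice and Bob sharing $|\phi^+\rangle$ with Bob's measurements $N^x_{k,l}=d(M^{k,l}_x)^T$ from \eqref{Nkl} still attains $\B=1$ in the present setting.

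For $\D<1$ I would argue by contradiction, starting from the expression
\begin{equation}
\D=\frac{1}{d^2}\max_{|\psi\rangle}\sum_{k,l}\max_x|\langle\psi|U_{k,l}|v_x\rangle|^2
\end{equation}
obtained in the proof of Theorem \ref{result:dd}. The equality $\D=1$ is equivalent to the existence of a pure state $|\psi\rangle$ saturating $\max_x|\langle\psi|U_{k,l}|v_x\rangle|^2=1$ for every pair $(k,l)$. Taking $(k,l)=(0,0)$ first pins down $|\psi\rangle=|v_{x_0}\rangle$ (up to a global phase) for some $x_0$. For every other $(k,l)$ the saturation condition forces $U_{k,l}^\dagger|v_{x_0}\rangle$ to be proportional to some basis vector $|v_{x(k,l)}\rangle$. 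Here the IC hypothesis enters decisively: evaluating \eqref{icpovmcond} at $i=x_0$ gives $\langle v_{x_0}|U_{k,l}|v_{x_0}\rangle\neq 0$, which, combined with orthonormality of $\{|v_j\rangle\}$, forces $x(k,l)=x_0$. Consequently $|v_{x_0}\rangle$ must be a simultaneous eigenvector of \emph{every} $U_{k,l}$.

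The final step is to rule out such a common eigenvector, and this follows from the Weyl relation $ZX=\omega XZ$ with $\omega\neq 1$ for $d\geqslant 2$. If $X|v\rangle=\alpha|v\rangle$ and $Z|v\rangle=\beta|v\rangle$, then both $ZX|v\rangle$ and $XZ|v\rangle$ equal $\alpha\beta|v\rangle$, while $ZX=\omega XZ$ gives $ZX|v\rangle=\omega\alpha\beta|v\rangle$; since $\alpha,\beta$ are nonzero (being eigenvalues of unitaries), this is a contradiction. Hence no vector can be a common eigenvector of $U_{0,1}=Z$ and $U_{1,0}=X$, so $|v_{x_0}\rangle$ cannot exist, and $\D<1$. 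I expect the only delicate bookkeeping in this plan to lie in translating ``$U_{k,l}^\dagger|v_{x_0}\rangle$ proportional to a basis vector'' into ``$|v_{x_0}\rangle$ is an eigenvector of $U_{k,l}$''—a short index-chasing argument that uses IC and orthonormality together.
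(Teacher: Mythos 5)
Your proof is correct, but it is organized differently from the paper's. The paper does not re-run the $\D<1$ argument; it simply shows that the IC condition \eqref{icpovmcond} implies the hypothesis \eqref{cond} of Theorem \ref{result:dd} (arguing by contrapositive: if \eqref{cond} fails then every $U_{k,l}$ maps some $|v_i\ra$ to some $|v_j\ra$ up to a phase, and since $X$ and $Z$ cannot both have an eigenvector inside the orthonormal set $\{|v_i\ra\}$, one of them must do so with $i\neq j$, whence $\la v_i|U_{k,l}|v_i\ra=0$ contradicts \eqref{icpovmcond}), and then invokes Theorem \ref{result:dd} as a black box. You instead redo the saturation analysis of $\D=1$ directly: the $(0,0)$ term forces $|\psi\ra=|v_{x_0}\ra$, the IC condition at $i=x_0$ together with orthonormality forces each $U_{k,l}^\dagger|v_{x_0}\ra$ to be proportional to $|v_{x_0}\ra$ itself rather than to some other basis vector, and the resulting common eigenvector of $X$ and $Z$ is killed by the Weyl relation $ZX=\w XZ$. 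Both routes are valid; yours is slightly longer but more self-contained and arguably more airtight, since the paper's key assertion that ``$X$ and $Z$ do not share any common eigenstate'' is left unproved there (and is not quite the fact actually needed at that point --- what is needed is that no eigenvector of $Z$ can coexist with an eigenvector of $X$ in one orthonormal set, or, as in your version, that the particular vector $|v_{x_0}\ra$ cannot be an eigenvector of both), whereas your commutation-relation argument supplies an explicit one-line proof. The paper's route buys brevity and reuse of Theorem \ref{result:dd}; yours buys transparency about exactly where informational completeness enters.
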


\begin{proof}
We need to prove the above condition \eqref{icpovmcond} implies \eqref{cond}. The contrapositive statement is more obvious. Negation of \eqref{cond} states that for all $k,l$ there exists $i,j$ such that 
\be \label{Uves}
U_{k,l}|v_i\ra = \mbox{e}^{\mathbbm{i}\theta_{k,l}}|v_j\ra 
\ee 
for some $\theta$ depending on $k,l$. Since $U_{1,0}=X,U_{0,1}=Z$ do not share any single common eigenstate, for $X$ or $Z$, \eqref{Uves} can hold only when $i\neq j$. Then multiplying both side of \eqref{Uves} by $\la v_i|$, we obtain a contradiction of \eqref{icpovmcond} either for $X$ or $Z$.
\end{proof}

\subsubsection{Examples of $|v_x\ra$}

For $d=2$, the vectors given in Eq. (\ref{magic state}) with an extra condition $\alpha=(0,\pi/2)$, satisfy the condition \eqref{icpovmcond} for IC-POVM. Taking $\alpha=\pi/4$ and $\beta=\cos^{-1}({1/\sqrt{3}})/2$, one obtains $\D\approx 0.7887$ for the respective IC-POVMs. 

For $d=3$, consider the following unnormalized basis 
\bea \label{icpovmd3}
|v_0\ra&=&|1\ra-|2\ra,\nonumber\\
|v_1\ra&=&(1+\sqrt{3})|0\ra+|1\ra+|2\ra,\,\nonumber\\
|v_2\ra&=&(1-\sqrt{3})|0\ra+|1\ra+|2\ra.
\eea
This basis satisfies the condition \eqref{icpovmcond} to be IC-POVM. By performing a simple optimization over four parameters (see Sec. \ref{sec:nu}), we get $\D\approx0.6436$ for the IC-POVMs obtained from the above vectors in \eqref{icpovmd3}. 

For $d=4$, we consider the tensor product basis $|v_0\rangle\otimes|v_0\rangle$, $|v_0\rangle\otimes|v_1\rangle$,  $|v_1\rangle\otimes|v_0\rangle$ and  $|v_1\rangle\otimes|v_1\rangle$, where $|v_0\ra$ and $|v_1\ra$ are given in Eq. (\ref{magic state}) with $\alpha=\pi/4$ and $\beta=\cos^{-1}({1/\sqrt{3}})/2$. From an optimization over six parameters described in Sec. \ref{sec:nu}, we find $\D\approx 0.622$ for the respective IC-POVMs.

For $d=2^r$ dimension, consider the tensor product basis of the vectors given in Eq. (\ref{magic state}) with $\alpha=\pi/4$ and $\beta=\cos^{-1}({1/\sqrt{3}})/2$. We verify that the vectors constructed this way satisfy the condition of IC-POVM given in Eq. (\ref{icpovmcond}) for $r=1,2,3,4,5$. Possibly the vectors may satisfy the condition for any $d=2^r$.

\subsection{Advantage for $\boldsymbol{(d+1)-}$outcome POVMs}

In the above scenario, Bob requires to perform $d^2$ measurements that grows polynomially with $d$. We now show another result of similar kind, where the number of measurements on Bob's side (or the outcome of the POVMs that we want to distinguish) is $d+1$.

\begin{thm}\label{result:d+1}
Given any set of orthogonal vectors $\{|v_i\ra\}^{d-1}_{i=0}$ that satisfy the conditions,
\bea \label{condd+1}
|\la j|v_i\ra | = 
\frac{1}{d}\begin{cases}
1, \qquad & i=j, \\[1ex]
\sqrt{d+1}, &   i\neq j ,
\end{cases}
\eea 
there exists a set of $d$ POVMs, each having $d+1$ outcomes, that are perfectly distinguishable in entanglement-assisted scenario but not perfectly distinguishable with single system. Here $|j\rangle$ is the computational basis. 
\end{thm}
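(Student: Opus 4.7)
My plan mirrors the proof of Theorem~\ref{result:dd}. First, I would exhibit explicit POVMs $\{M_x^a\}_{a=0}^{d}$ for $x\in\{0,\dots,d-1\}$ built from the computational basis and the given $\{|v_i\rangle\}$. A natural ansatz is to take each element rank-one, $M_x^a=\tfrac{d}{d+1}|w_x^a\rangle\langle w_x^a|$, with two structural requirements on the unit vectors: (a) for each fixed $a$, $\{|w_x^a\rangle\}_{x=0}^{d-1}$ is an orthonormal basis of $\mathbbm{C}^d$, and (b) for each fixed $x$, $\sum_{a=0}^{d}|w_x^a\rangle\langle w_x^a|=\tfrac{d+1}{d}\mathbbm{I}$. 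Condition~(b) is exactly $\sum_aM_x^a=\mathbbm{I}$ (validity of the POVM), while condition~(a) yields the column identity $\sum_xM_x^a=\tfrac{d}{d+1}\mathbbm{I}$ used on Bob's side. I would set $|w_x^0\rangle=|x\rangle$ and $|w_x^d\rangle=|v_x\rangle$ and construct the remaining vectors by decomposing the residual operator $A_x:=\tfrac{d+1}{d}\mathbbm{I}-|x\rangle\langle x|-|v_x\rangle\langle v_x|$, which by the hypothesis $|\langle x|v_x\rangle|=1/d$ is a PSD operator of rank $d-1$ and trace $d-1$ with spectrum $\{0,2/d,(d+1)/d,\dots,(d+1)/d\}$.

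For the entanglement-assisted side, let Alice and Bob share $|\phi^+\rangle$ and let Bob use $N_a^b=\tfrac{d+1}{d}(M_b^a)^T$. Property~(a) ensures $\sum_bN_a^b=\mathbbm{I}$, so each $\{N_a^b\}_b$ is a valid POVM. Using Fact~\ref{fact:phi+} and the rank-one structure of $M_x^a$,
\begin{equation}
\mathcal{B}^{|\phi^+\rangle}=\frac{1}{d^2}\sum_{x,a}\mathrm{Tr}\!\left(M_x^a(N_a^x)^T\right)=\frac{d+1}{d^3}\sum_{x,a}\mathrm{Tr}\!\left((M_x^a)^2\right)=1,
\end{equation}
because the sum involves $d(d+1)$ rank-one elements each with nonzero eigenvalue $d/(d+1)$.

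For the single-system side, the expression~\eqref{D} reduces to $\mathcal{D}=\tfrac{1}{d+1}\max_{|\psi\rangle}\sum_{a=0}^{d}\max_{x}|\langle\psi|w_x^a\rangle|^2$. The value $\mathcal{D}=1$ would force, for every $a$, the existence of some $x(a)$ with $|\langle\psi|w_{x(a)}^a\rangle|=1$, i.e.\ $|\psi\rangle$ would have to coincide up to a phase with a vector from each of the $d+1$ bases. Applying this to $a=0$ and $a=d$ would yield $|\psi\rangle=|x\rangle=|v_{x'}\rangle$ for some $x,x'$, whence $|\langle x|v_{x'}\rangle|=1$; but the hypothesis of the theorem forces $|\langle x|v_{x'}\rangle|$ to equal $1/d$ or $\sqrt{d+1}/d$, both strictly less than $1$ for $d\geq 2$ (since $\sqrt{d+1}<d$ there). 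This contradiction yields $\mathcal{D}<1$, completing the argument.

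The main obstacle I anticipate is the coordinated construction of the intermediate bases $\{|w_x^a\rangle\}_{a=1}^{d-1}$ realising both (a) and (b) simultaneously: one must decompose each residual $A_x$ into $d-1$ unit vectors in such a way that the vectors sharing a common outcome label $a$ are mutually orthonormal as $x$ varies. This combinatorial-linear-algebraic step is precisely where the full numerical profile of $|\langle j|v_i\rangle|$ in the hypothesis is indispensable; once it is in place, the rest of the argument is a transcription of the template of Theorem~\ref{result:dd}.
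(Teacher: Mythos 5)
Your framework is the right one and two of the three ingredients are sound: the requirements (a) and (b) you isolate are exactly what is needed, the computation giving $\mathcal{B}^{|\phi^+\rangle}=1$ is correct, and the argument for $\mathcal{D}<1$ is complete as stated, since it only uses the two ``pinned'' bases $\{|x\ra\}$ and $\{|v_x\ra\}$ (indeed it is essentially the paper's argument, which also only compares the $a=0$ and $a=d$ slots and invokes $|\la x|v_{x'}\ra|\in\{1/d,\sqrt{d+1}/d\}<1$). However, there is a genuine gap where you yourself flag ``the main obstacle'': the existence of the $d-1$ intermediate orthonormal bases $\{|w_x^a\ra\}_{a=1}^{d-1}$ satisfying (a) and (b) simultaneously is the actual content of the construction, and your proposal does not supply it. Decomposing each residual $A_x=\tfrac{d+1}{d}\I-|x\ra\!\la x|-|v_x\ra\!\la v_x|$ separately (which a Schur--Horn-type argument does permit, given its spectrum $\{0,2/d,(d+1)/d,\dots\}$) only enforces (b); it gives no control over the cross-$x$ orthonormality demanded by (a), and without that the $N_a^b$ are not valid measurements and $\mathcal{B}=1$ is not established.

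The paper closes this gap with a single explicit choice that makes both constraints automatic: take $|w_x^a\ra=Z^a|v_x\ra$ for $a=0,\dots,d-1$ (with $Z=\sum_i\omega^i|i\ra\!\la i|$) together with $|w_x^d\ra=|x\ra$. Condition (a) holds for free because each $Z^a$ is unitary and $\{|v_x\ra\}_x$ is orthonormal. Condition (b) follows from the dephasing identity $\sum_{a=0}^{d-1}Z^a\,\Xi\,(Z^a)^\dagger=d\sum_i\la i|\Xi|i\ra\,|i\ra\!\la i|$: the hypothesis \eqref{condd+1} fixes the diagonal of $|v_x\ra\!\la v_x|$ to be $1/d^2$ at position $x$ and $(d+1)/d^2$ elsewhere, so adding $|x\ra\!\la x|$ yields exactly $\tfrac{d+1}{d}\I$. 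This is precisely the point at which the full numerical profile of $|\la j|v_i\ra|$ is used, and it is the step your proposal defers rather than proves. I would therefore not accept the proposal as a proof; it is a correct reduction of the theorem to an existence statement that still has to be established.
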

\begin{proof}
Let us begin by using the orthonormal vectors 
$\ket{v_i}$ to introduce the following unitary matrix
\be \label{Uvx}
U = \sum_{i=0}^{d-1} |v_i\ra\!\la i|.
\ee
Thus, $U$ is the unitary that takes the computational basis to the basis $\{|v_i\ra\}$.

Let us then define the following vectors, 
\begin{equation} \label{etad}
|\eta^a_{x}\ra  = Z^aU|x\ra 
\end{equation}
for $a=0,\dots,d-1$ and
\begin{equation}
|\eta^{d}_x\ra = |x\ra
\end{equation}
for $a=d$, where $x=0,\ldots,d-1$ and $Z$ is given in Eq. \eqref{ZX}.


With the aid of these vectors we consider the POVM elements,
\be \label{d+1 outcome POVM}
M^a_x = \frac{d}{d+1} |\eta^a_x\ra\!\la\eta^a_x| .
\ee 
Clearly, the eigenvalues of these operators are non-negative and less than 1. Let us then show that $\sum_aM_x^a=\I$. To do so, we need the following relation that for any operator $\Xi = \sum_{i,j} c_{i,j}|i\ra\!\la j|$ acting on $\mathbbm{C}^d$,
\bea  \label{za}
\sum^{d-1}_{a=0} Z^a \ \Xi \ (Z^a)^\dagger &=& \sum_{i,j} \sum^{d-1}_{a=0} \w^{a(i-j)} c_{i,j}|i\ra\!\la j| \nonumber \\
&=& d \sum_i c_{i,i} |i\ra\!\la i| ,
\eea  
where we have applied the following identity, 
\be 
\sum^{d-1}_{a=0} \w^{ak} = 
\begin{cases}
 0, \quad k\neq 0 \\
 d, \quad k= 0.
\end{cases}
\ee 
Replacing $|\eta^a_x\ra$ from \eqref{etad} and using the above relation \eqref{za}, we find
\bea 
\sum_{a=0}^{d-1} M_x^a &=& \frac{d}{d+1} \sum\limits_{a=0}^{d-1} \ Z^a |v_x\ra\!\la v_x|(Z^a)^\dagger + \frac{d}{d+1} |x\ra\!\la x|  \  \nonumber \\
&=& \frac{d}{d+1}\Bigg[ d\sum_{i\neq x}  |\la i|v_x \ra|^2 |i\ra\!\la i| \nonumber\\
&&\qquad\qquad\, + \left(d |\la x|v_x \ra|^2 + 1 \right) |x\ra\!\la x|\Bigg].
\eea  
Due to the conditions \eqref{condd+1}, the above quantity is $\I$ and hence, $M_x^a$ form a valid POVM for all $x$.  

To show $\B =1$, let Alice and Bob share maximally entangled state (\ref{maxent}), and let Bob's measurements be given by
\be \label{Nkl1}
N^x_{a} = (|\eta^a_x\ra\!\la\eta^a_x|)^T = \frac{d+1}{d} (M^a_x)^T .
\ee 
It is trivial to see that $N^x_{a}N^{x'}_{a} = \delta_{x,x'} N^x_{a}$ for all $a$, and
$\sum_{x} N^x_{a} = \I$,
thereby $N^x_{a}$ form valid projective measurements. Following the similar method as in the proof of previous theorems we can readily show $\B =1$, by taking $p(x)=1/d$. 

Now to check that $\D<1$, we first note that
\be 
\la \psi| M^{a}_x|\psi\ra = 
\frac{d}{d+1}\begin{cases}
 | \la\psi|Z^a|v_x\ra|^2,  &  a=0,\dots,d-1 \\[1ex]
| \la\psi|x\ra|^2, &  a=d,
\end{cases}
\ee 
where $|\psi\ra$ is an arbitrary pure state from $\mathbbm{C}^d$. Subsequently, we have 
\bea 
\D &=& \frac{1}{d+1} \max_{|\psi\ra} \Bigg[ \sum^{d-1}_{a=0} \max_{x} \{ | \la\psi|Z^a|v_x\ra|^2 \} \nonumber\\
&&\qquad\qquad\qquad\quad + \max_x \{ | \la\psi|x\ra|^2 \}\Bigg] .
\eea 
The above expression is 1, if and only if, for all $a=0,\dots,d-1$,
\begin{equation} \label{D1cond2}
\max_x\{ | \la\psi|Z^a|v_x\ra|^2 \} =1
\end{equation}
and
\begin{equation}
\max_x\{ | \la\psi|x\ra|^2 \} =1.
\end{equation} 
For $a=0$, Eq. \eqref{D1cond2} implies $|\psi\ra = |v_x\ra$ for some $x$. In that case, $\max_x\{ | \la v_x |x\ra|^2 \}$ cannot be one due to Eq. \eqref{condd+1}. Hence, $\mathcal{D} < 1$ which completes the proof. 
\end{proof}

\subsubsection{Examples in $d=2,3,4$}\label{sec:exp234}

Let us here present exemplary sets of vectors $\ket{v_x}$ satisfying
(\ref{condd+1}) for $d=2,3,4$.

For $d=2$, we consider the following two orthogonal vectors 
\be \label{v0v1d2}
|v_0\ra = \frac{1}{2}\left(|0\ra+\sqrt{3}|1\ra\right), \quad |v_1\ra=\frac{1}{2}\left(\sqrt{3}|0\ra-|1\ra\right),
\ee
which satisfy the conditions given in Eq. (\ref{condd+1}). The optimal value of $\D$ for the measurements defined by \eqref{etad}-\eqref{d+1 outcome POVM} is $5/6$ (see the next subsection for a proof). This particular measurement set was proposed by  Sedl\'ak and Ziman  in Ref. \cite{ziman}.

For $d=3$, consider the following set of orthogonal vectors, 
\be
|v_i\ra = \frac{1}{3} |i\ra - \frac{2}{3} |i+1\ra - \frac{2}{3} |i+2\ra ,
\ee
where $i=0,1,2$ and the sum inside $|\cdot \ra$ is modulo $3$. They readily satisfy conditions given in Eq. \eqref{condd+1}. By performing an optimization over four parameters described in Sec. \ref{sec:nu}, we find $\D \approx 0.698$ for the POVMs defined by \eqref{etad}-\eqref{d+1 outcome POVM}.

For $d=4$, the vectors are as follows,
\bea
 |v_0\ra &=& \frac{1}{4} |0\ra +  \frac{\sqrt{5}}{4}  |1\ra + \frac{\sqrt{5}}{4} |2\ra + \frac{\sqrt{5}}{4} |3\ra  \nonumber \\
|v_1\ra &=& \frac{\sqrt{5}}{4} |0\ra -  \frac{1}{4} |1\ra + \frac{\sqrt{5}}{4} |2\ra - \frac{\sqrt{5}}{4} |3\ra  \nonumber \\
|v_2\ra &=& \frac{\sqrt{5}}{4} |0\ra -  \frac{\sqrt{5}}{4} |1\ra - \frac{1}{4} |2\ra + \frac{\sqrt{5}}{4} |3\ra  \nonumber \\
|v_3\ra &=& \frac{\sqrt{5}}{4} |0\ra +  \frac{\sqrt{5}}{4} |1\ra - \frac{\sqrt{5}}{4} |2\ra - \frac{1}{4} |3\ra .
\eea 
Again, a simple optimization over six parameters yields $\D \approx 0.706$ for the respective POVMs.

\subsection{All pure two-qubit entangled states provide advantage}\label{steerability two-qubit}

So far, we have employed only the maximally entangled state \eqref{maxent} for presenting the merit of using entanglement-assisted scenario. Let us probe whether such merit persists if we use non-maximally entangled states. 
For this purpose, we consider the pair of two-dimensional POVMs $M_x=\{M_x^a\}_{a=0}^2$ $(x=0,1)$ introduced in Sec. \ref{sec:exp234} as follows
\begin{eqnarray}
&&\hspace{-0.5cm}M_i^{0}=\frac{2}{3}|v_i\ra\la v_i|,\;\;M_i^{1}=\frac{2}{3}|\tilde{v}_i\ra\!\la\tilde{v}_i|,\;\; M_i^{2}=\frac{2}{3}|i\ra\!\la i|, \label{trine1}
\end{eqnarray}
where $\ket{v_i}$ $(i=0,1)$ are defined in 
Eq. (\ref{v0v1d2}) and $\ket{\tilde{v}_i}=Z\ket{v_i}$. 
This pair is perfectly distinguishable using the two-qubit maximally entangled state \cite{ziman}. 
Here, we aim to show that if Alice and Bob share any pure two-qubit entangled state, then the same pair of measurements (up to unitary rotation) can be distinguished with higher probability than the optimal probability obtained with single systems. To this end, we state the following theorem.
\begin{thm}\label{advantage two qubit}
For any two-qubit pure entangled state $|\phi\ra$, there exists a set of two three-outcome POVMs such that $\B^{\ \phi} > \D$. 
\end{thm}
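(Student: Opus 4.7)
The plan is to use the very same trine POVMs of equation~(\ref{trine1}) and to show that they already yield $\B^{\phi} > \D$ for every non-maximally entangled pure two-qubit state. Writing $|\phi\ra = (U_A \otimes U_B)|\phi_\theta\ra$ with $|\phi_\theta\ra = \cos\theta\,|00\ra + \sin\theta\,|11\ra$ and $\theta \in (0,\pi/4]$ by Schmidt decomposition, I would absorb $U_A$ into Alice's POVMs (this is the ``unitary rotation'' clause in the statement) and $U_B$ into Bob's subsequent optimization. Thus it suffices to establish $\B^{\phi_\theta} > \D$ for the un-rotated POVMs of (\ref{trine1}) paired with the Schmidt-form state. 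The single-system value $\D = 5/6$ then follows from a short Bloch-sphere argument: the six vectors $\pm\vec{v}_x^a$ form a planar trine at $120^\circ$ angles, so $\D = \tfrac12 + \tfrac16 \max_{|\vec r|=1}\sum_a |\vec r \cdot \vec v^a|$, and this maximum equals $2$ (attained when $\vec r$ is aligned with one of the trine directions), recovering the value reported by Sedl\'ak and Ziman.

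For the entanglement-assisted side, Alice's measurements are fixed and only Bob's are to be optimized in $\B^{\phi_\theta}$. I would introduce the sub-normalized conditional states on Bob's qubit, $\rho_{a|x} = \tr_A[(M_x^a \otimes \I)|\phi_\theta\ra\!\la\phi_\theta|]$; for each outcome $a$ Bob faces a two-state Helstrom discrimination between $\rho_{a|0}$ and $\rho_{a|1}$. Using $\sum_{a,x}\tr\rho_{a|x} = 2$, the maximum over Bob's POVMs collapses to
\begin{equation*}
\B^{\phi_\theta} \;=\; \tfrac12 + \tfrac14 \sum_{a=0}^{2} \|\rho_{a|0} - \rho_{a|1}\|_1.
\end{equation*}
Since $M_x^a = \tfrac23 |v_x^a\ra\!\la v_x^a|$, each $\rho_{a|x}$ is rank one, and the Hermitian differences $\rho_{a|0} - \rho_{a|1}$ are $2\times 2$ with negative determinant, so their trace norms reduce to $\sqrt{(\mathrm{Tr})^2 - 4\det}$.

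Carrying out the arithmetic I expect $\|\rho_{a|0} - \rho_{a|1}\|_1 = \tfrac13 \sqrt{1 + 3\sin^2(2\theta)}$ for $a = 0,1$ and $\|\rho_{2|0} - \rho_{2|1}\|_1 = \tfrac23$, producing the clean closed form
\begin{equation*}
\B^{\phi_\theta} \;=\; \tfrac{2}{3} + \tfrac{1}{6}\sqrt{1 + 3\sin^2(2\theta)},
\end{equation*}
which strictly exceeds $5/6 = \D$ precisely when $\sin(2\theta) \neq 0$, i.e., for every pure entangled $|\phi\ra$, and which reproduces $\B = 1$ at $\theta = \pi/4$ as a sanity check. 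The main obstacle is purely bookkeeping: one must verify that each of the three Hermitian differences genuinely has negative determinant for all $\theta \in (0,\pi/4]$, so that the simple closed form for the $2\times 2$ trace norm applies, and then keep the trigonometry tidy. Conceptually there is no hurdle, because Bob's optimization decouples across $a$ into three independent qubit Helstrom problems.
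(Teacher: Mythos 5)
Your proposal is correct and reaches exactly the paper's closed form $\B^{\ \phi}=\tfrac16\bigl(4+\sqrt{1+3\mathcal{C}^2}\bigr)$ with $\D=5/6$, but the two key optimizations are carried out by genuinely different means. For $\D$, the paper parameterizes the input state as $\sin\delta\ket{0}+\cos\delta\ket{1}$ and does an explicit case analysis over $\delta$, whereas you use the Bloch-sphere trine identity $\D=\tfrac12+\tfrac16\max_{\vec r}\sum_a|\vec r\cdot \vec v^a|=\tfrac12+\tfrac13$; both are sound, and yours is shorter and makes the geometric origin of $5/6$ transparent. For $\B^{\ \phi}$, the paper parameterizes Bob's three projectors by Bloch angles $(\theta_a,\phi_a)$ and maximizes the resulting trigonometric expression term by term, while you observe that Bob's optimization decouples over the outcome $a$ into three independent two-state Helstrom problems, giving $\B^{\ \phi_\theta}=\tfrac12+\tfrac14\sum_a\|\rho_{a|0}-\rho_{a|1}\|_1$ with rank-one conditional states whose differences have negative determinant (indeed $\det=-\tfrac19\sin^2 2\theta$ for $a=0,1$ and $-\tfrac49\sin^2\theta\cos^2\theta$ for $a=2$, both strictly negative for $\theta\in(0,\pi/4]$), so the $2\times2$ trace norms evaluate to $\tfrac13\sqrt{1+3\sin^2 2\theta}$, $\tfrac13\sqrt{1+3\sin^2 2\theta}$, and $\tfrac23$, reproducing the paper's value. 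Your route buys two things: it optimizes over \emph{all} of Bob's POVMs rather than only projective ones (the paper implicitly relies on the Helstrom optimum being projective without saying so), and it avoids the six-angle maximization entirely; the paper's route, in exchange, exhibits the optimal observables $N_0,N_1,N_2$ explicitly, which it then reuses when transporting the construction to a general $\ket{\phi}$ via $U_A M^a_x U_A^\dagger$ and $U_B N_y U_B^\dagger$ — the same local-unitary absorption step you invoke at the outset.
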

\begin{proof}
Consider the two POVM defined
by the measurement operators $M_x^a$ given in Eq. (\ref{trine1}),
where $x=\{0,1\}$ and $a=\{0,1,2\}$. Here $|v_0\ra, |v_1\ra$ are defined in Eq. \eqref{v0v1d2}, and $\ket{\tilde{v}_i}=Z\ket{v_i}$.
These measurement operators follow the construction given in Eq. (\ref{d+1 outcome POVM}) and since they all belong to the $x$-$z$ plane of the Bloch sphere, without loss of generality, we can consider the form of the state to be $|\psi\ra=\sin\delta |0\ra+\cos \delta |1\ra$, where $\delta\in [0,\pi/4]$. One can straightforwardly verify that for this state $\D$ expresses as
\begin{widetext}
\begin{eqnarray}
\D&=&\frac{1}{12} \Bigg\{4\max \{ \cos ^2\delta,\sin ^2\delta \}+\max \left\{ 2-\sqrt{3} \sin (2 \delta)-\cos (2 \delta), 2+\sqrt{3} \sin (2 \delta)+\cos (2 \delta)\right\}\nonumber\\
&&\hspace{1cm}+\max \left\{2-\sqrt{3} \sin (2 \delta)+\cos (2 \delta), 2+\sqrt{3} \sin (2 \delta)-\cos (2 \delta)\right\}\Bigg\}\nonumber\\
&=&\displaystyle\frac{1}{6} \times
\left\{\begin{array}{ll}
3+2\cos(2\delta), &0\leq\delta\leq\pi/12,\\[1ex]
3+\cos(2\delta)+\sqrt{3} \sin (2 \delta), &\pi/12\leq\delta\leq\pi/4
%
\end{array}
\right..
\end{eqnarray}
\end{widetext}
It is easy to check that the optimal value of $\D$ is $5/6$ for $\delta=0$ and $\pi/6$. \\

For any pure two-qubit entangled state $|\phi\ra$, we know that there exists two local unitary operations $U_A$ and $U_B$ such that 
\be \label{2qs}
|\phi'\ra := U_A\otimes U_B |\phi\ra = \sin{\alpha} |00\ra + \cos{\alpha} |11\ra  
\ee 
for $\alpha \in (0,\pi/4]$. 
To find the value of $\B$, we first consider the non-maximally entangled state $|\phi^\prime\ra$. On Bob's side, we then consider three projective 
measurements  with the measurement operators given in the general form 
\begin{eqnarray}
N_a^{x}&=&\frac{1}{2}\mathbbm{1}+(-1)^x (\sin \theta_a  \cos \phi_a \sigma_x+\sin \theta_a \sin \phi_a  \sigma_y\nonumber\\
&&\hspace{2cm}+ \cos \theta_a \sigma_z),   
\end{eqnarray}
where $0\leq \theta_a\leq \pi$ and $0\leq \phi_a\leq 2\pi$. Using the form of $M^a_x$ given in the Eq. \eqref{trine1} and for the state  \eqref{2qs} we have
\begin{eqnarray}
\B^{\ \phi'} &=& \max_{\{N^x_a\}} \ \sum_{x,a} p(x) \langle \phi'|M^a_x \otimes N^{x}_{a}|\phi'\rangle\nonumber\\
&=&\frac{1}{12} \max_{\{\theta_0,\theta_1,\theta_2,\phi_0,\phi_1,\phi_2\}}\Big(6+2 \cos \theta_0-\cos \theta_1-\cos \theta_2 \nonumber \\ 
&& \hspace{2.8cm} - \sqrt{3} \sin (2\alpha) \sin \theta_2  \cos \phi_2 \nonumber\\
&&\hspace{2.8cm}  + \sqrt{3} \sin (2 \alpha)  \sin \theta_1 \cos \phi_1\Big)\nonumber\\
&=&\frac{1}{12}\max_{\{\theta_1,\theta_2\}}\Big[8+ \sqrt{3} \sin (2\alpha) (\sin \theta_1+ \sin \theta_2) \nonumber \\ 
 && \qquad \qquad \quad -\cos\theta_1-\cos\theta_2\Big].
\end{eqnarray}
Using the fact that $a\sin t-b\cos t\leq \sqrt{a^2+b^2}$, where the maximum is achievable for $\sin t=a/ \sqrt{a^2+b^2}$, we find
\begin{equation}\label{Bxnm}
\B^{\ \phi'}=\frac{1}{6}\Big(4+\sqrt{1+3\mathcal{C}^2}\Big),
\end{equation}
where $\mathcal{C}=\sin(2\alpha)$ is the concurrence \cite{rungta} of the state $|\phi^\prime\ra$. Hence, $\B>5/6$ as $\mathcal{C}>0$ whenever the state is entangled.  Remark that for the two-qubit maximally entangled state the value is one. The optimal observables for Bob are as follows
\be \label{Nax2}
N_0=\sin\theta \sigma_x-\cos\theta\sigma_z, \quad N_1=-\sin\theta \sigma_x-\cos\theta\sigma_z
\ee
and
\begin{equation}
    N_2=\sigma_z,
\end{equation}
where $\sin\theta=\sqrt{3}\mathcal{C}/\sqrt{1+3\mathcal{C}^2}$. 

Subsequently, for the general state $|\phi\ra$, the measurement set $U_AM^a_{x}U^\dagger_A$ along with the Bob's measurement $U_BN_y U^\dagger_B$ given in Eqs. \eqref{trine1} and Eq. \eqref{Nax2} respectively, achieve the same value of $\B^{\ \phi'}$ as given in Eq. \eqref{Bxnm}.
This completes the proof.
\end{proof}

We finally discuss the efficacy of a class of mixed states in the above mentioned measurement discrimination task. Let us consider the Werner state \cite{werner},
\begin{equation}\label{ws}
    \rho_W=p |\phi^+\ra\!\la \phi^+|+\frac{1-p}{4}\mathbbm{1},
\end{equation}
where $|\phi^+\ra$ is the two-qubit maximally entangled state given in Eq. \eqref{maxent}. The state \eqref{ws} is entangled for $p>1/3$ and steerable for $p>1/2$ \cite{wiseman}. If we use this state for discriminating the  measurements given in Eq. \eqref{trine1}, a straight forward calculation leads to
\begin{equation}
\B^{\rho_W}
=\frac{1+p}{2}.
\end{equation}
Thus, the Werner state \eqref{ws} provides advantage when $p>2/3$. Note that the state is steerable in the range $1/2<p\leqslant 2/3$, but does not show any advantage in this particular measurement distinguishability task. However, there may exists another measurement discrimination task for which this state provides advantage in that range.

\section{Conclusion and open problems} \label{conclusion}

In this article we discuss the single-shot measurement discrimination problem for an arbitrary $d$-dimensional quantum system. We introduce a framework allowing to study the problem in two scenarios: the single-system scenario and the entanglement-assisted one. Interestingly, entangled quantum systems can provide an advantage in a measurement distinguishability task. To be precise, we provide  criteria to ascertain whether a set of measurements (projective or generalized) can be perfectly discriminated with the aid of the maximally entangled state, but not in the single-system scenario. Furthermore, we prove that the advantage in the entanglement-assisted scenario is a witness of steerability of the underlying quantum state. Finally, we show that any pure two-qubit entangled state provides advantage in the measurement discrimination task. 

Let us also outline some possible directions for further research. 
In \cite{piani2005} and \cite{piani2015}, Piani and Watrous showed that entangled states and steerable states are necessary as well as sufficient for the advantage in channel discrimination and sub-channel discrimination task, respectively, in certain scenarios. Likewise, in section \ref{necessary steering}, we show that steerability is the necessary condition for the advantage in measurement distinguishability tasks. But, whether it is also a sufficient criteria or not, remains an open question. Since quantum measurements are reckoned as a sub-class of quantum channels, the following conjecture certainly enhances the previous results -- \emph{For every steerable state $\rho_{AB}$ by Bob, there exists a set of quantum measurements $\{M_x\}$ such that $\B^{\ \rho_{AB}} > \D$. Alongside, one may extend the result of Theorem \ref{advantage two qubit} to all the pure entangled states.} Furthermore, as maximally entangled states can perfectly discriminate the measurements given in theorems \ref{advantage projective}, \ref{result:dd} and \ref{result:d+1}, it would be interesting to look for a set of measurements in any $d$ such that the advantage is unbounded, or, to be precise, $\B/\D$ increases with $d$. In addition, one may investigate measurements for which the optimal probability is achieved by non-maximally entangled states. Besides, there has been some work on discriminating measurements without labelling the outcomes \cite{ziman2008, ziman2009} and hence, it would be interesting to generalize our protocol to unambiguous measurement discrimination problem.

\section*{acknowledgements} We are thankful to Anubhav Chaturvedi, Markus Grassl, Karol \.Zyczkowski, Micha\l{} Horodecki and Anindya Sengupta for helpful discussions.  C. D. acknowledges the support by the Foundation for Polish Science under the ``Quantum Optical
Technologies" project carried out within the International Research Agendas programme co-financed by the European Union under the European Regional Development Fund. T. B. acknowledges the support by the Foundation for Polish Science (IRAP project, ICTQT, contract no. 2018/MAB/5, co-financed by EU within Smart Growth Operational Programme). D. S. and R. A. acknowledge the support by Foundation for Polish Science through the First Team project (No First TEAM/2017-4/31).

\appendix

\section{A proof}
\label{AppA}

Here we show that for $n=d=2$ and $n=d=3$ there are no
orthonormal bases in $\mathbbm{C}^2$ and $\mathbbm{C}^3$, respectively which satisfy the assumptions 1 and 2 of Theorem \ref{advantage projective}. 

Let us begin with the simpler case of $n=d=2$ and consider two two-element orthonormal bases  in $\mathbbm{C}^2$
$\{|v_0^0\rangle,|v_0^1\rangle\}$ and $\{|v_1^0\rangle,|v_1^1\rangle\}$.
Without any loss of generality we can assume the first basis to be the
computational one, i.e., $|v_0^i\rangle=|i\rangle$. The assumption 1 of
Theorem \ref{advantage projective} imposes that also 
$\{|v_0^0\rangle,|v_1^0\rangle\}$ and $\{|v_0^1\rangle,|v_1^1\rangle\}$
are also orthonormal bases in $\mathbbm{C}^2$, implying that 
up to phases $|v_1^0\rangle=|1\rangle$ and
$|v_1^1\rangle=|0\rangle$. This means that also the second basis
$\{|v_1^0\rangle,|v_1^1\rangle\}$ is the computational one.
Then, however, the assumption 2 is violated. 

Let us now move on to the case $n=d=3$ and consider three orthonormal bases in $\mathbbm{C}^3$, $\{|v_x^0\rangle,|v_x^1\rangle,|v_x^2\rangle\}$ with $x=0,1,2$. As before we can assume that the first basis is the computational one, $|v_0^i\rangle=|i\rangle$. The first assumption of Theorem \ref{advantage projective} means that $\{|v_0^a\rangle,|v_1^a\rangle,|v_2^a\rangle\} $
with $a=0,1,2$ must also be orthonormal bases in $\mathbbm{C}^3$. It is not difficult to see that the most general form of the vectors $|v_x^a\rangle$ compatible with this requirement is given in the table below,
%
\begin{center}
 \begin{table}[h]
\begin{tabular}{c|c|c|c}
& $a=0$ & $a=1$  & $a=2$    \\ \hline 
$x=0$ & $|0\ra$  & $|1\ra$ & $|2\ra$  \\ 
$x=1$ & $a|1\ra+b|2\ra$  & $p|0\ra+q|2\ra$ & $s|0\ra-t|1\ra$  \\ 
$x=2$ & $b^*|1\ra-a^*|2\ra$  & $q^*|0\ra-p^*|2\ra$ & $t^*|0\ra-s^*|1\ra$  
\end{tabular}
\end{table}
\end{center}
%
where $a,b,p,q,s,t\in\mathbbm{C}$. We additionally need to impose that the vectors in the second and the third row of this table are pairwise orthogonal. It is, however, fairly easy to see that this last requirement can only be met if 
all the vectors $|v_x^a\rangle$ are actually elements of the computational basis of $\mathbbm{C}^3$. But then, as before, the second assumption of Theorem \ref{advantage projective} cannot be met.

\bibliography{ref} 

\end{document}